\def \fullpaper {}
\newcommand{\commentout}[1]{}
\newcommand{\eat}[1]{}
\newcommand{\calD}{{\mathcal D}}
\newcommand{\mI}{{\mathcal I}}
\newcommand{\calF}{{\mathcal F}}
\newcommand{\Prob}{{\operatorname{Pr}}}
\newcommand{\Exp}{{\mathbb{E}}}
\newcommand{\opt}{\mathsf{OPT}}
\newcommand{\sol}{\mathsf{SOL}}
\newcommand{\ball}{\mathsf{Ball}}
\newcommand{\bw}{\mathbf{w}}
\newcommand{\hx}{\widehat{x}}
\newcommand{\hy}{\widehat{y}}
\newcommand{\hz}{\widehat{z}}
\newcommand{\ty}{\widetilde{y}}
\newcommand{\dist}{\mathrm{d}}
\newcommand{\ftfl}{$\mathsf{FTFL}$}
\newcommand{\ftflc}{$\mathsf{FTFL\text{-}Convex}$}
\newcommand{\ftm}{$\mathsf{FTMed}$}
\newcommand{\uniftm}{$\mathsf{Uni\text{-}FTMed}$}
\renewcommand{\d}{{\sf d}}
\newcommand{\red}[1]{\textcolor{red}{#1}}
\newtheorem{theorem}{Theorem}
\newtheorem{lemma}{Lemma}
\newtheorem{observation}{Observation}
\newtheorem{fact}{Fact}
\newcommand{\set}[1]{\left\{#1\right\}}
\newcommand{\card}[1]{\left|#1\right|}
\newcommand{\floor}[1]{\left\lfloor#1\right\rfloor}
\newcommand{\ceil}[1]{\left\lceil#1\right\rceil}
\newcommand{\mU}{\mathcal U}
\newcommand{\mB}{\mathcal B}
\newcommand{\queue}{\mathsf{queue}}
\newcommand{\dav}{d_{\mathsf{av}}}
\newcommand{\dmax}{d_{\max}}
\title{A Constant Factor Approximation Algorithm for Fault-Tolerant $k$-Median}
\author{Mohammadtaghi Hajiaghayi\thanks{Supported in part by NSF CAREER award 1053605, ONR YIP award N000141110662, DARPA/AFRL award FA8650-11-
1-7162, a Google Faculty Research Award, and a University of Maryland Research and Scholarship Award (RASA). Department of Computer Science, University of Maryland at College Park, USA. The author is also with AT\&T Labs research. Email: \textrm{hajiagha@cs.umd.edu}.},
~ Wei Hu\thanks{Institute for Interdisciplinary Information Sciences
Tsinghua University, China. Email: \textrm{huwei9527@gmail.com}},
~ Jian Li\thanks{Institute for Interdisciplinary Information Sciences
Tsinghua University, China. Email: \textrm{lijian83@mail.tsinghua.edu.cn}.},
~ Shi Li \thanks{Department of Computer Science, Princeton University. Email: \textrm{shili@cs.princeton.edu}}
~and~ Barna Saha\thanks{AT\&T Research Laboratory, Florham Park, NJ 07932. Email: \textrm{barna@research.att.com}. }}
\date{}
\title{A Constant Factor Approximation Algorithm for Fault-Tolerant k-Median}
\begin{document}


\maketitle

\ifdefined\fullpaper\else
\documentclass[11pt]{article}

\begin{document}
\fi

\begin{abstract}
In this paper, we consider the fault-tolerant $k$-median problem and give the \emph{first} constant factor approximation
algorithm for it. In the fault-tolerant generalization of classical $k$-median problem, each client $j$ needs to be assigned to at least $r_j \ge 1$ distinct open facilities. The service cost of $j$ is the sum of its distances to the $r_j$ facilities, and the $k$-median
constraint restricts the number of open facilities to at most $k$. Previously, a constant factor was known only for the special case when
all $r_j$s are the same, and a logarithmic approximation ratio for the general case.
In addition, we present the first polynomial time algorithm for the fault-tolerant
$k$-median problem on a path or a HST by showing that the corresponding LP
always has an integral optimal solution.

We also consider the fault-tolerant facility location problem, where
the service cost of $j$ can be a weighted sum of its distance to the
$r_j$ facilities. We give a simple constant factor approximation algorithm, generalizing
several previous results which only work for nonincreasing weight vectors.
\end{abstract}
\newpage
\clearpage
\setcounter{page}{1}
\section{Introduction}

The $k$-median problem is one of the central problems in approximation algorithms and operation research.
The most basic version of the $k$-median problem is defined as follows.
We are given a set of facilities $\calF$ and a set of demands (or clients) $\calD$
in a metric space.
We can open at most $k$ facilities, and then assign each client $j$ to the opened facility that is closest
to it. Assigning demand $j$ to facility $i$ incurs an assignment cost of $\dist(i,j)$,
where $\dist(i,j)$ is the distance between $i$ and $j$.
 Our goal is to choose at most $k$ facilities so that the sum of the assignment costs is minimized.
Lin and Vitter \cite{lin1992approximation} gave a polynomial-time algorithm
that, for any $\epsilon>0$, finds a solution of cost no more than $2+\epsilon$ times the optimum, while
using at most $(1+\epsilon)k$ facilities.
The first non-trivial approximation algorithm that produces
a feasible solution (i.e., open at most $k$ facilities) achieves a logarithmic approximation ratio
by combining the metric embedding results \cite{bartal1998approximating,JSK2003treeembed} and
the fact that $k$-median can be solved in polynomial time in a tree metric.
Charikar, Guha, Tardos and Shmoys~\cite{charikar2002constant} gave
the first constant factor approximation algorithm using LP rounding.
This was improved by a series of papers~\cite{CharikarGuha2005costscale, Jain2003greedy, arya2001local, CL11}
and the current best approximation ratio is $1+\sqrt{3}+\epsilon$ for any $\epsilon>0$
via pseudo approximation~\cite{Li:2013}.
For the {\em fault tolerant} version of $k$-median (\ftm),
each client $j$ needs to be assigned to at least $r_j \ge 1$ distinct open facilities.
The service cost of $j$ is the sum of its distances to the $r_j$ facilities.
A special case of \ftm\ is when all the $r_j$s are the same. We call such instance as {\it uniform} \ftm\
(denoted by \uniftm).
For \uniftm, Swamy and Shmoys \cite{swamy2008fault} developed a $4$-approximation using
the Lagrangian relaxation technique. However, their technique does not
work when $r_j$s are not same, even when $r_j$s are either $1$ or $2$.
For general \ftm, where $r_j$s can be non-uniform, the best known result is a logarithmic factor approximation algorithm~\cite{anthony2008plant}.

In the closely related uncapacitated facility location problem (UFL),
there is a facility opening cost $f_i$ for each facility $i$ and
our objective is to minimize the sum of the facility opening cost
and the total assignment cost.
The first constant factor approximation algorithm for UFL was given by Shmoys, Tardos and Aardal~\cite{shmoys1997approximation},
using the filtering technique of Lin and Vitter \cite{lin1992approximations}.
Subsequently, a variety of techniques in approximation algorithms
has been successfully applied to UFL ( see e.g., \cite{Chudak1998improvedapproximation, Jain2001primal, arya2001local, Archer2003lagrangianrelaxation, Jain2003greedy, Chudak2004, CharikarGuha2005costscale, Li2011}).
The current best approximation ratio is 1.488 by Li~\cite{Li2011},
which is quite close to the best known inapproximability bound of 1.463 due to Guha and Khuller~\cite{guha1998greedy}.
In this paper, we study the {\em  fault-tolerant} version of UFL where
each client $j$ needs to be assigned to at least $r_j \ge 1$ distinct open facilities.
Client $j$ is associated with a weight vector $\bw_j=\{w_j^{(1)},w_j^{(2)},\ldots, w_j^{r_j}\}$.
The service cost of $j$ is the weighted sum of its distances to the $r_j$ facilities,
i.e., $\sum_i w_j^{(i)} \dist(h_i,j)$ where $h_i$ is the $i$th closest open facility.
It models the situation where each client needs one or more ``backup'' facilities
in case its closest facility fails.
The fault-tolerant facility location (\ftfl) is a generalization of UFL in which $r_j = 1$ for each client $j$.
\ftfl\ with nonincreasing weight vectors ($w_j^{(1)}\geq w_j^{(2)}\geq \ldots$ for each client $j$)
has been studied extensively.
Jain and Vazirani gave a primal-dual based algorithm
achieving a logarithmic approximation factor \cite{kamal2003approximation}.
The first constant factor approximation algorithm
with a factor of $2.408$ is due to Guha, Meyerson and Munagala~\cite{guha2003constant}.
This was later improved to $2.076$ by Swamy and Shmoys~\cite{swamy2008fault}
and $1.7245$ by Byrka, Srinivasan and Swamy~\cite{byrka2010fault}, which is currently
the best known ratio.
However, nothing is known for \ftfl\ with general positive weight vectors.
Measuring service cost using general weight vectors is often a natural choice.
For example, in the fault-tolerant $k$-center problem \cite{khuller1997fault, chaudhuri1998p},
the service cost of a client is chosen to be its distance to the $r$th closest facility
(this corresponds to the weight vector $(w^{(1)}_j=0, \ldots,w^{(r-1)}_j=0 , w^{(r)}_j=1, w^{(r+1)}_j=0, \ldots)$).
Further consider the following application in a wireless sensor network.
We need to place hotspots (facilities) to provide wireless services for a designated area.
Each hotspot may fail independently with probability $p$ at every time slot.
Each client is a sensor that needs to communicate with one hotspot.
To ensure that the communication succeeds with probability at least $1-\delta$ at each time slot,
the transmission radius (fixed all the time) of the client needs to
be the distance from the client to its $\lceil \log_p \delta \rceil$th closest hotspot.
If the communication cost of a client scales linearly with its transmission radius,
the problem is exactly \ftfl\ with weight vectors of the form $(0,\ldots, 0,1,0,\ldots)$.

\eat{
We are given a set $V$ points in a metric space.
Suppose these points are demanding certain service and
we would like to open a subset $S$ of points as facilities to provide the service.
We capture the facility opening cost by $F(S)$.
Given $S$, the service cost of point $v$ is $g_v(S)$.
The aggregate service cost of all points is defined to be $G(S)=f(g_{v_1}(S),g_{v_2}(S), \ldots, g_{v_n}(S))$.
The total cost is $F(S)+G(S)$.

The above formulation generalizes many clustering problem.

\begin{enumerate}
\item Facility location problem.
$F(S)=\sum_{i\in S} f_i$, $g_v(S)=\min_{i\in S} \dist(v,i)$, $G(S)=\sum_{v\in V} g_v(S)$.
\item $k$-median problem.
$F(S)=0$ if $|S|\leq k$ and $F(S)=+\infty$ otherwise.
$g_v(S)=\min_{i\in S} \dist(v,i)$, $G(S)=\sum_{v\in V} g_v(S)$.
\item $k$-center problem.
$F(S)=0$ if $|S|\leq k$ and $F(S)=+\infty$ otherwise.
$g_v(S)=\min_{i\in S} \dist(v,i)$, $G(S)=\max_{v\in V} g_v(S)$.
\item $L_p$ norm.
In general, we can let $G(S)=|| \{g_v(S)\}_{v\in V} ||_p$.

\item Fault tolerant version.
Suppose for each $v\in V$, there is a weight vector $\{w_{v1},w_{v2},\ldots\}$.
Let $u_1,u_2,\ldots,$ be the facilities in $S$, sorted according to
the (nondecreasing) distance to $v$.
Let $g_v(S)=\sum_i w_{vi}\dist(v,u_i)$.
Nonincreasing weight vectors have been considered for fault tolerant facility location and $k$-median problem
\cite{swamy2008fault,guha2003constant,byrka2010fault}.
For fault tolerant $k$-center problem, the following weight vectors have been studied:
$w_{vk}=1$ for some integer $k$ and $w_{vj}=0$ for all $j\ne k$ \cite{khuller1997fault,chaudhuri1998p}.
\item Matroid median (center).
$F(S)=0$ if $S$ is an independent set of the given matroid and $F(S)=+\infty$ otherwise.
\end{enumerate}
}

\subsection{Our Results}


Our main result is a constant factor approximation algorithm for general \ftm.
The current best approximation algorithm for
general \ftm\ achieves a logarithmic approximation ratio~\cite{anthony2008plant}.
Note that no constant factor approximation algorithm is known even for the case where
the demands are either 1 or 2 and no previous techniques for $k$-median or uniform \ftm\
\cite{charikar2002constant, arya2001local, jain1999primal, CL11, swamy2008fault}
seems to be generalizable easily to this case.
Our algorithm is built on solving the natural linear programming (LP) relaxation of \ftm. Rounding is involved and proceeds through stages.
First, based on the LP solution, we classify the clients into {\em safe} and {\em dangerous}.
The safe clients are those whose distance to the furthest fractional facility assigned to it
can be bounded by a constant factor of the connection cost defined by the LP solution
(for the precise definition, see Section~\ref{sec:flkmed}).
Handling such clients is easy and well understood in recent literature
on the fault-tolerant facility location problem \cite{swamy2008fault,byrka2010fault,Yan12}.
In fact, in the fault-tolerant facility location problem, by scaling up the facility variables
by a constant factor, one can transform all clients to safe, making it easy to approximate.
However, in \ftm, we can not scale the facility variables
since scaling would violate the constraint that we can open at most $k$ facilities.

Next, we apply the adaptive clustering algorithm in \cite{Yan12}
to produce a family of disjoint sets of facilities that we call \emph{bundles}.
However in \cite{Yan12}, one can select multiple copies of the same facility. In order to avoid that,
we need to keep a new mapping.
In the rounding step, we ensure that each bundle contains exactly 1 open facility
by randomly selecting an open facility inside it
(according to the probabilities suggested by the LP),
and we can show that the expected connection cost of a safe client is bounded by a constant times its connection cost in the LP solution.
On the other hand, handling the dangerous clients is significantly challenging and requires new techniques.

We judiciously create a family $\{B_j\}$ of facility sets for each client $j$ choosing from the fractionally open facilities serving $j$ such that $B_j$
is {\em almost laminar}, that is the two sets are either nearly disjoint, or one is almost contained in the other.
This becomes technically challenging primarily for the fact that demands among the clients could be highly skewed.
Once we have such a structure, further refinements through filtering and other manipulations,
lead to a laminar family of sets of facilities that have the nice property
of $y(B_j)$ being very close to $r_j$.
Here $y(B_j)$ is the expected number of fractional facilities in $B_j$.
In the randomized rounding step,
in addition to guaranteeing every bundle contains exactly 1 facility,
we can also guarantee that every set in the laminar family contains
either $\floor{y(B_j)}$ or $\ceil{y(B_j)}$ open facilities.
Since $y(B_j)$ is close to $r_j$, the rounding procedure opens $r_j$ facilities in $B_j$ with high probability
and this suffices to show a constant approximation for the expected service cost of $j$.


As our second result, we show there is a polynomial time algorithm that can exactly solve general \ftm\ in a line metric.
Unlike the ordinary $k$-median problem on a line, which can be easily solved in polynomial time by dynamic programming,
it is unclear how to generalize the dynamic program to \ftm\ (either uniform or non-uniform).
Our algorithm is in fact based on linear program. We show that the LP always has an optimal solution that is integral.
We rewrite the LP based on any (fractional) optimal solution and show the new LP matrix is totally
unimodular. A similar argument can be used to show that the LP of general \ftm\ on a hierarchically well separated tree (HST)
also has an integral optimal solution.
This improves the result in \cite{charikar1998rounding} where they showed that the integrality gap of the $k$-median LP
on HSTs is at most 2.
\footnote{It is well known that $k$-median on trees
can be solved in polynomial time by combinatorial methods (e.g., \cite{tamir1996pn}).
}

We also consider the fault tolerant version
of the facility location problem (\ftfl) where the service cost
of a client is a weighted sum of the distances to the closest open facility,
the 2nd closest open facility and so on.
Our main result for this problem is a simple constant factor approximation algorithm for \ftfl\
with a general weight vector for each client.
This generalizes several previous results \cite{guha2003constant,swamy2008fault,byrka2010fault},
where the weight vectors are nonincreasing.
For general weight vectors, the most commonly used ILP formulation work does not hold since the optimal integral LP solution may not correspond to a feasible solution. To remedy this, we use an extension of the ILP formulation for facility location proposed by Kolen and Tamir~\cite{kolen1990covering}. However, one can easily construct an example where the LP relaxation for this formulation has an unbounded integrality gap (see Section \ref{sec:ftfl}). Our approach is based on formulating a strengthened LP relaxation for the problem by adding ``knapsack cover constraints" \cite{carr2000strengthening,Bansal2010GeSSC}.

\subsection{Other Related Work}

Facility location and $k$-median are central problems in approximation algorithms.
Many variants and generalizations have been studied extensively in the literature,
including capacitated facility location \cite{pal2001facility, levi2004lp, svitkina2010lower}
and $k$-median \cite{chuzhoy2005approximating}, multilevel facility location \cite{aardal19993},
universal facility location \cite{mahdian2003universal,li2011generalized},
matroid median \cite{hajiaghayi2010budgeted,krishnaswamy2011matroid,CL11},
knapsack median \cite{kumar2012constant,CL11}, just to name a few.
A closely related problem is the fault-tolerant $k$-center problem
which has also been studied
and constant factor approximation algorithms are known for several of its variants
\cite{khuller1997fault,chaudhuri1998p}.
Recently, Yan and Chrobak
studied the fault-tolerant facility placement problem
which is almost the same as \ftfl\ except that
we can open more than one copies of a facility
and they gave a constant factor approximation algorithm based on LP rounding \cite{Yan12}.

\ifdefined\fullpaper\else
\bibliographystyle{alpha}
\bibliography{cluster}
\end{document}
\fi

\ifdefined\fullpaper\else
\documentclass[11pt]{article}

\begin{document}
\fi


\section{Fault Tolerant {\large \em k}-Median}
\label{sec:flkmed}


We use $\mI = \left(k, F, C, d, \set{r_j}_{j \in C}\right)$ to denote a \ftm\ instance.  In the instance, $k\geq 1$ is an integer, $F$ is the set of facilities, $C$ is the set of clients, $d$ is a metric over $F \cup C$ and $r_j \in [R]$ is the requirement of $j$.  The solution of $\mI$ is a set $S$ of $k$ facilities from $F$ and its cost is the sum, over all clients $j \in C$, of the total distance from $j$ to its closest $r_j$ facilities in $S$.

The following is the natural LP relaxation for the \ftm:
\ifdefined \stoc
\begin{align}
\text{minimize} \quad \sum_{j \in C}\sum_{i \in F}d(j, i) x_{i, j}\quad\quad\quad \label{lp:ftm}
\end{align}
\vspace*{-15pt}
\begin{align*}
\text{subject to }& \quad y_i - x_{i,j} \geq 0 \qquad \forall i \in F, \  j \in C \\
& \quad\sum_{i \in F} x_{i, j} = r_j \qquad \forall j \in C \\
& \quad\sum_{i \in F} y_i \leq k \qquad  \\
& \quad x_{i, j}, y_i \in [0, 1]  \quad \forall i \in F,\  j \in C
\end{align*}
\else
\begin{align}
\min \quad \sum_{j \in C}\sum_{i \in F}d(j, i) x_{i, j} \label{lp:ftm}
\end{align}
\vspace*{-15pt}
\begin{alignat*}{4}
y_i - x_{i,j} &\geq 0 &\qquad \forall i \in F, &\  j \in C &\qquad\qquad
\sum_{i \in F} x_{i, j} &= r_i &\qquad \forall j \in C &\\
\sum_{i \in F} y_i &= k &\qquad & &\qquad\qquad x_{i, j}, y_i &\in [0, 1] & \quad \forall i \in F,&\  j \in C
\end{alignat*}
\fi


Throughout the paper, we let $y$ denote the $y$-vector obtained by solving the above LP. For a subset $S \subseteq F$ of facilities, define the \emph{volume} of $S$ to be $y(S) := \sum_{i \in S}y_i$. W.l.o.g., we assume $y(F) = k$.

We can assume $y_i \leq 1$ and $x_{i,j} \in \set{0, y_i}$ by the following splitting operation. Consider a facility $i$ and a client $j$ such that $x_{ij}< y_i$. We replace $i$ with two facilities $i_1,i_2$ and let $y_{i_1}=x_{i_1j}=x_{ij}, y_{i_2}=y_i - x_{i,j}, x_{i_2j}=0$. Of course, when we make such clones of a facility, we can only open one of them.

Instead of using $(y, x)$, we use $\left(\set{y_i}_{i \in F}, \set{F_j}_{j \in C}, g\right)$ to denote an LP solution, where $F_j \subseteq F$ and $y(F_j) = r_j$ for every $j \in C$, and $g$ shall be defined later.  In this solution, $y_i$ indicates whether to open the facility $i$. We assume $0 < y_i \leq 1$ for every $i \in F$. Then $i \in F_j$ if and only if $x_{i,j} = y_i$.   We also assume $F_j$ contains the closest $r_j$ volume of facilities to $j$. That is,  for any $j \in C, i \in F_j, i' \notin F_j$, we have $d(j, i) \leq d(j, i')$.
For some non-empty set $S \subseteq F$, let
$$\dav(j, S) = \frac{\sum_{i \in S}d(j, i)y_i}{y(S)}$$
be the average distance from $j$ to $S$.
Let $\dmax(j, S)$ be the maximum distance from $j$ to any node in $S$,
i.e., $\max_{i \in S}d(j, i)$.

Notice that we can alway split a facility $i$ into $2$ facility $i'$ and $i''$ with $y_i = y_{i'} + y_{i''}$ arbitrarily (replace any $F_j \ni i$ with $F_j \setminus \set{i} \cup \set{i', i''}$) without changing the value of the LP solution.
This turns out to be convenient in the following scenario.
Suppose we are given a sequence of facilities $(i_1, i_2, \cdots, i_m)$ such that $\sum_{s=1}^m y_{i_s} \geq r$.
We are interested in the integer $t$ such that $\sum_{s=1}^{t-1} y_{i_s} < r$ and $\sum_{s=1}^{t} y_{i_s} \geq r$. If $\sum_{s = 1}^t y_{i_s} > r$, we can split $i_t$ into two facilities $i'$ and $i''$ with $y_{i'} = r - \sum_{s=1}^{t-1} y_{i_s}$ and $y_{i''} = \sum_{s=1}^ty_{i_s} - r$. By splitting, we assume we can always find the integer $t$ such that $\sum_{s=1}^t y_{i_s}$ is exactly $r$.
Let $j\in C$ be a client and $S$ be a set of facilities such that $y(S) \geq r$. Sort the facilities of $S$ according to their distances to $j$, from the closest to the furthest.  Let $s$ (resp. $t$) be the integer such that the first $s$ (resp. $t$)  facilities in the order has volume exactly $r-1$ (resp. $r$). Then, $S'$ contains the $p$-th facility in the sequence for every $p$ from $s+1$ to $t$.  So $y(S') = 1$. If $y$ is an integral solution, $S'$ would correspond to the $r$-th closest facility to $j$. Define $\dav^{r}(j, S) = \dav(j, S')$ and $\dmax^{r}(j, S) = \dmax(j, S')$ where $S'$ is the following set.  

We observe some simple yet useful facts. Let $j \in C$ be a client and $S$ be a set of facilities with $y(S) = r$ for some integer $r$. Then, we have that

\begin{enumerate}
\item $\dav^t(j, S) \leq \dmax^t(j, S) \quad \forall t \in [r]$,
\item $\dmax^t(j, S) \leq \dav^{t+1}(j, S) \quad \forall t \in [r-1]$,
\item $\dav(j, S) = \frac{1}{r}\sum_{t=1}^r \dav^t(j, S)$.
\end{enumerate}

For ease of notation, we omit the second parameter of $\dav$ and $\dmax$ if it is $F_j$.
That is, we let $\dav(j) = \dav(j, F_j), \dmax(j) = \dmax(j, F_j), \dav^r(j) = \dav^r(j, F_j)$ and $\dmax^r(j) = \dmax^r(j, F_j)$.

In several steps mentioned above, we may split one facility into several copies.
In the rounding step, to avoid opening more than one copies for each facility,
we need to keep a mapping $g$ where $g(i)$ indicates the original facility co-located with $i$
from which $i$ is split. $g(i) = i$ if $i$ itself is the original facility.
Thus, $d(i, g(i)) = 0$. Keep in mind that we need to make sure in the rounding step that at most 1 facility is open
in $g^{-1}(i) :=\set{i' \in F : g(i') = i}$ for any $i \in F$.

The high level idea of our algorithm is as follows.
We solve LP \eqref{lp:ftm} to obtain a fractional solution $\left(\set{y_i}_{i \in F}, \set{F_j}_{j \in C}, g\right)$. Our goal is to output a random set $S\subseteq F$ of size $k$ such that the expected connection cost of $j$
is $O(r_j\dav(j))$ for each client $j$.
We first use the adaptive clustering algorithm of \cite{Yan12}
to construct a family $\mU$ of disjoint sets of volume 1.
If we randomly open 1 facility for each set $U \in \mU$, we can show that
the expected connection cost of each client $j \in C$ is $O(1)r_j\dav(j) + \dmax(j)$.
This can handle the clients $j$  with small $\dmax(j)/(r_j\dav(j))$ (which we call {\em safe clients}).

The remaining task is to handle the dangerous clients, i.e., the clients with a large
 $\dmax(j)/\dav^{r_j}(j)$ value  (the exact definition will appear later).
We first apply a filtering step to select a subset $D'$ of dangerous clients.
For each $j\in D'$, we create a set $B'_j$ of facilities such that
the set family $\mB = \{B'_j : j\in D'\}$ is laminar.
Using the laminar family $\mB$, we design a  process to output a random set $S$ of facilities so that
(1) at most 1 facility is open inside $g^{-1}(i)$ for any $i \in F$,
(2) each facility $i$ is open with probability exactly $y_i$;
(3) exactly 1 facility in each $U \in \mU$ is open and
(4) we open either $\floor{y(B'_j)}$ or $\ceil{y(B'_j)}$ facilities inside each $B'_j \in \mB$.
With these properties, we can prove the constant approximation for \ftm.

The remainder of this section is organized as follows. We show how to construct $\mU$ and
$\mB$ respectively in Section ~\ref{subsec:review-yan} and \ref{subsec:creating-B}.
Then, we show how to round the fractional solution based on $\mU$ and $\mB$ in Section~\ref{subset:rounding}.
Finally, we prove the constant approximation ratio in section~\ref{subsec:proof-of-constant}.

\subsection{Construction of the Family {\large $\mU$}}
\label{subsec:review-yan}

Given a $k$-median instance defined by $k, F, C, d, \set{r_j}_{j \in C}$ and a fractional solution $(\set{y_i}_{i \in F}, \set{F_j \subseteq F}_{j \in C})$ to the instance, the algorithm of  \cite{Yan12} outputs a family $\mU$ of disjoint sets of volume 1, which we call \emph{bundles}, as well as a set $\set{U_{j, t}}_{t \in [r_j]}$ of $r_j$ different bundles from $\mU$ for each $j \in C$.  The algorithm is described in Algorithm~\ref{alg:create-bundles}.

\ifdefined \stoc
\begin{algorithm*}[t]
\else
\begin{algorithm}[t]
\fi

\caption{Create bundles}
\label{alg:create-bundles}
\begin{algorithmic}[1]
\REQUIRE{A FT-$k$-median instance $\mI = \left(k, F, C, d, \set{r_j}_{j \in C}\right)$ and a fractional solution $\left(\set{y_i}_{i \in F}, \set{F_j}_{j \in C}, g\right)$ to $\mI$}\;
\ENSURE{A family $\mU$ of disjoint bundles, and a set $\set{U_{j,t}}_{t\in [r_j]}$ of $r_j$ different bundles from $\mU$ for each $j \in C$}\;
\STATE $\mU \leftarrow \emptyset$,  $F'_j \leftarrow F_j$ and $\queue_j \leftarrow \emptyset$ for every client $j \in C$;
\STATE \textbf{While} there exists a client $j$ such that the length of $\queue_j$ is smaller than $r_j$
\STATE \hspace{\algorithmicindent} Select such a client $j$ with the minimum $\dav^1(j, F'_j) + \dmax^1(j, F'_j)$;
\STATE \hspace{\algorithmicindent} Let $U \subseteq F'_j$ be the 1 volume of facilities such that $\dav^1(j, F'_j) = \dav(j, U)$ and $\dmax^1(j, F'_j) = \dmax(j, U)$;
\COMMENT one might clone facilities in obtaining the set $U$ and $g$ is updated suitably to reflect this.
\label{STATE:select-j-and-B}
\STATE \hspace{\algorithmicindent} \textbf{If} there exists a bundle $U' \in \mU$ such that $U' \cap U \neq \emptyset$
\STATE \hspace{\algorithmicindent} \textbf {then} add $U'$ to the $\queue_j$ and remove $U' \cap U$ from $F'_j$;
\label{STATE:case-intersection}
\STATE \hspace{\algorithmicindent} \textbf {else} add $U$ to $\mU$, add $U$ to $\queue_j$, and remove $U$ from $F'_j$;
\label{STATE:case-no-intersection}
\RETURN $\mU$ and $\set{U_{j, t}}_{j \in C, t \in [r_j]}$, where $U_{j,t}$ is the $t$-th bundle in $\queue_j$.
\end{algorithmic}

\ifdefined \stoc
\end{algorithm*}
\else
\end{algorithm}
\fi

If some $U$ is added to $\mU$ at Line~\ref{STATE:case-no-intersection} of Algorithm~\ref{alg:create-bundles}, we say the \emph{creator} of $U$ is $j$.  It is easy to see that the bundles in $\mU$ are mutually disjoint. Moreover, for any $j \in C$, the $r_j$ bundles added to $\queue_j$ are all different, since every time we add a bundle $U$ to the $\queue_j$, we removed $\mU \cap F'_j$ from $F_j$.

\begin{lemma}
\label{lemma:close-to-bundles}
For any client $j \in C$, for any $r \in [r_j]$, we have $\dav(j, U_{j,r}) \leq 2\dmax^{r}(j) + \dav^{r}(j)$.
\end{lemma}

\begin{proof}
We prove the following statement:  when the length of $\queue_j$ is $r-1$, we have $\dav^1(j, F'_j) \leq\dav^{r}(j)$ and $ \dmax^1(j, F'_j) \leq \dmax^r(j)$.  Notice that we only remove facilities from $F'_j$ if we added some set $B$ to $\queue_j$.  Moreover, we remove at most 1 volume of facilities from $F'_j$.  Thus, when the length of $\queue_j$ is $r_1$, we removed in total at most $r-1$ volume of facilities from $F'_j$.  It is easy to see that in order to maximize $\dav^1(j, F'_j)$ ($\dmax^1(j, F'_j)$, resp.), it is the best to remove from $F'_j$ the $r-1$ volume of closest facilities of $j$, in which case we have $\dav^1(j, F'_j) = \dav^r(j)$($\dmax^1(j, F_j) = \dmax^r(j)$, resp.). Thus, we proved the statement.

Suppose now the length of $\queue_j$ is $r-1$. Clearly, the volume of $F'_j$ is at least 1. Consider the next time when we selected this client $j$ and the correspondent $U$ at Line~\ref{STATE:select-j-and-B}.  We know $\dav(j, U) \leq \dav^r(j)$ and $ \dmax(j, U) \leq  \dmax^r(j)$. If there is a $U' \in \mU$ such that $U' \cap U \neq \emptyset$, let $j'$ be the creator of $U'$. Then, we have $\dav(j', U') + \dmax(j', U') \leq \dav(j, U) + \dmax(j,U)$, since we selected $j'$ and $U'$ before we selected $j$ and $U$.  Thus, $d(j, j') \leq \dmax(j, U) + \dmax(j', U')$ and
\ifdefined\stoc
\begin{align*}
\dav(j, U') &\leq d(j, j') + \dav(j', U') \\
&\leq \dmax(j,U) + \dmax(j',U') + \dav(j', U') \\
&\leq 2\dmax(j,U) + \dav(j,U),
\end{align*}
\else
\[\dav(j, U') \leq d(j, j') + \dav(j', U') \leq \dmax(j,U) + \dmax(j',U') + \dav(j', U') \leq 2\dmax(j,U) + \dav(j,U),
\]
\fi
which is at most $2\dmax^r(j) + \dav^r(j)$.

If such $U'$ does not exist, we added $U$ to $\mU$ and $\queue_j$ at Line~\ref{STATE:case-no-intersection}, we have $\dav(j, U) \leq \dav^r(j)$.
\end{proof}

\subsection{Construction of the laminar Family {\large $\mB$}}
\label{subsec:creating-B}

We say a client $j \in C$ is \emph{dangerous} if
\[\dmax(j) \geq 45\dav^{r_j}(j).\]
The rest of clients are {\em safe}. Let $D$ denote the set of dangerous clients. In this section, we first apply a filtering phase to obtain a subset $D' \subseteq D$ of dangerous clients. Then, for each $j \in D'$ we select a set $B'_j \subseteq F_j$ of facilities so that $\mB = \{B'_j : j \in D'\}$ form a laminar family.

\paragraph{Filtering:} We say two distinct dangerous clients $j, j' \in D$ \emph{conflict} if  $r_j = r_{j'}$ and
\[
d(j, j') \leq 6\max\set{\dav(j), \dav(j')}.\label{equ:filtering-condition}
\]

In the filtering phase, we select a subset $D'\subseteq D$ of dangerous clients such that no two clients in $D'$ conflict each other.  Algorithm~\ref{alg:filtering} describes the filtering process.

\begin{algorithm}
\caption{Filtering}
\label{alg:filtering}
\begin{algorithmic}[1]
\STATE $D' \leftarrow \emptyset$;
\STATE \textbf{For} $r \leftarrow 1$ to $R$ \textbf{do}
\STATE \hspace{\algorithmicindent} $J = \set{j \in D : r_j = r}$;
\STATE \hspace{\algorithmicindent}\textbf{While} $J \neq \emptyset$ \textbf{do}
\STATE \hspace{\algorithmicindent}\hspace{\algorithmicindent}Let $j$ be the client in $J$ with the minimum $\dav(j)$;
\STATE \hspace{\algorithmicindent}\hspace{\algorithmicindent}Let $J'$ be the set of clients in $J$ that conflict $j$;
\STATE \hspace{\algorithmicindent}\hspace{\algorithmicindent}Let $J\leftarrow J \setminus J' \setminus \set{j}$ and $D' \leftarrow D' \cup \set{j}$;
\RETURN $D'$.
\end{algorithmic}
\end{algorithm}

\begin{fact}
If $j \in D \setminus D'$, then there must be a client $j' \in D'$ such that $r_{j'} = r_j$, $\dav(j') \leq \dav(j)$ and $d(j, j') \leq 6\dav(j)$.
\end{fact}


\begin{figure*}[t]
    \begin{center}
    \includegraphics[width=0.8\linewidth]{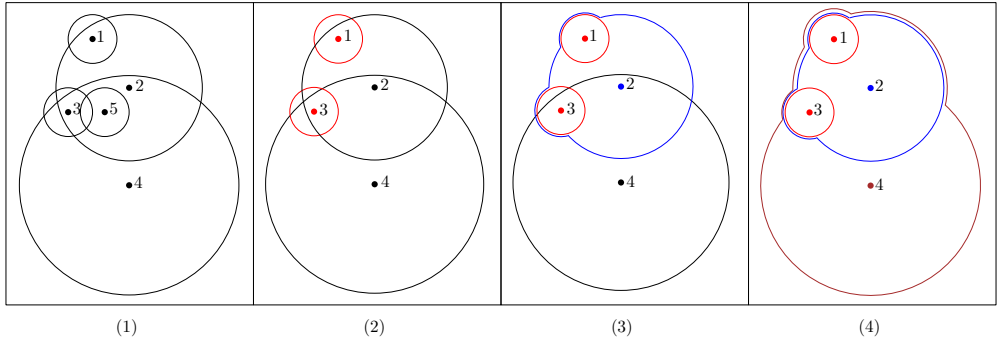}
    \caption{There are five facilities in the graph. (1): Before the filtering phase,
    3 and 5 are in conflict and 5 is filtered out. (2)-(4): We build the laminar family in
    non-decreasing order of $r_j$s.}
    \label{fig:fig}
    \end{center}
\end{figure*}

\paragraph{Building a laminar family for dangerous clients}
For any client $j \in D'$, let $B_j := \ball(j, \dmax(j)/15)$, where $\ball(j, L) = \set{i \in F: d(i,j) \leq L}$ is the set of facilities that are within a distance $L$ from $j$.
We notice that with the definition of $B_j$, if a copy of some facility $i$ is in $B_j$ (recall a facility may be split into several copies),
all copies of $i$ are in $B_j$.
We first present a few properties of $B_j$, then show how to construct the laminar family $\mB$.
The following lemma shows that the volume of $B_j$ is very close to $r_j$.

\begin{lemma}
\label{lemma:B-j-large}
For a client $j \in D$ with $r_j = r$, we have
\[
r - \frac{15\dav^r(j)}{\dmax(j)}\leq y(B_j) <  r.
\]
\end{lemma}
\begin{proof}
Notice that $\dmax(j)/15 \geq \dav^r(j) \geq \dmax^{r-1}(j)$; all clients in $F_j \setminus B_j$ contribute to $\dav^r(j)$. Thus we have $$\dav^r(j) \geq y(F_j \setminus B_j)\dmax(j)/15,$$ which implies
\begin{align*}
y(B_j) & = r - y(F_j \setminus B_j)  \geq r - \frac{\dav^r(j)}{\dmax(j)/15}.
\end{align*}
\end{proof}

In particular, Lemma~\ref{lemma:B-j-large} implies that $y(B_j) \geq r - 15/45 = r-1/3$.
The following lemma shows that two distinct dangerous clients in $D'$ are necessarily far way.
A corollary of the lemma which is useful later is that $B_j$ and $B_{j'}$ are disjoint.

\begin{lemma}
\label{lemma:dangerous-clients-far-away}
Let $j$ and $j'$ be two distinct clients in $D'$ such that $r_j = r_{j'} = r$. Then
\[
d(j, j') \geq \dmax(j)/10 + \dmax(j')/10.
\]
\end{lemma}

\begin{proof}
Assume otherwise. Then, by triangle inequalities,
\[\card{\dmax(j) - \dmax(j')} \leq d(j, j') < \dmax(j)/10 + \dmax(j')/10.\] Thus,
\[
\frac{\dmax^r(j')}{\dmax^r(j)} \in \left[\frac{1-1/10}{1+1/10}, \frac{1+1/10}{1-1/10}\right] = \left[\frac{9}{11}, \frac{11}{9}\right].
\]

Since $B_{j'} \subseteq \ball(j, d(j, j') + \dmax(j')/15)$ and
\ifdefined \stoc
\begin{align*}
d(j, j') + \dmax(j')/15 &\leq \frac{1}{10}\left(1+\frac{11}{9}\right)\dmax(j) + \frac{11/9}{15}\dmax(j)\\
&< 0.5\dmax(j),
\end{align*}
\else
\[d(j, j') + \dmax(j')/15 \leq \frac{1}{10}\left(1+\frac{11}{9}\right)\dmax(j) + \frac{11/9}{15}\dmax(j) < 0.5\dmax(j),\]
\fi
we have $B_{j'} \subseteq \ball(j, 0.5\dmax(j))$. Thus, we have $B_j \cup B_{j'} \subseteq \ball(j, 0.5\dmax(j))$, implying $y(B_j \cup B_{j'}) < r$, which further implies
\[
y(B_j \cap B_{j'}) = y(B_j) + y(B_{j'}) - y(B_j \cup B_{j'}) \geq r - \frac23.
\]
Then, $\dav(j, B_j \cap B_{j'}) \leq r\dav(j)/(r-2/3) \leq 3\dav(j)$. Similarly, $\dav(j', B_j \cap B_{j'}) \leq 3\dav(j')$. By triangle inequality $d(j, j') \leq 3(\dav(j) + \dav(j')) \leq 6\max\set{\dav(j), \dav(j')}$. $j$ and $j'$ can not be both in $D'$ since they conflict each other, leading to a contradiction.
\end{proof}

The following lemma shows that if two dangerous clients with different demands are close to each other, the ball for the client with the larger
demand is necessarily much larger than the one for the other client.

\begin{lemma}
\label{lemma:dangerous-geometrically-decrease}
Let $j$ and $j'$ be two clients in $D'$ with $r = r_j > r' = r_{j'}$. Suppose $d(j, j') \leq \dmax(j)/15 + \dmax(j')/10$. Then $\dmax(j') \leq \frac16\dmax(j)$.
\end{lemma}

\begin{proof}
Assume otherwise; then $\dmax(j) < 6\dmax(j')$.
Then, we have that
\begin{align*}
d(j, j') + \frac{\dmax(j)}{15} & \leq \frac{6\dmax(j')}{15} + \frac{\dmax(j')}{10} + \frac{6\dmax(j')}{15} \\
& =0.9\dmax(j')
\end{align*}
and $B_j \subseteq \ball\left(j', d(j, j') + \frac{\dmax(j)}{15}\right)$.
Thus, we have $B_j\subseteq \ball\left(j', 0.9\dmax(j')\right)$.
Since $y(B_j) \geq r - 1/3 > r-1 \geq r'$, we have $y(\ball(j', 0.9\dmax(j')) \geq r'$, contradicting the definition of $\dmax$.
 \end{proof}

In fact,
if $j$ and $j'$ satisfy the condition of Lemma~\ref{lemma:dangerous-geometrically-decrease},
we can see that the distance from every point in $B_{j'}$ to $j$ is at most
\ifdefined \stoc
\begin{align*}
d(j,j')+\frac{1}{15}\dmax(j') &\leq \frac{1}{15}\dmax(j)+\frac{1}{10}\dmax(j')+\frac{1}{15}\dmax(j')\\
&\leq (\frac{1}{15} +\frac{1}{36} )\dmax(j).
\end{align*}
\else
\[
d(j,j')+\frac{1}{15}\dmax(j') \leq \frac{1}{15}\dmax(j)+\frac{1}{10}\dmax(j')+\frac{1}{15}\dmax(j')\leq (\frac{1}{15} +\frac{1}{36} )\dmax(j).
\]
\fi
Intuitively, this suggests that $B_{j'}$ is {\em almost} contained in $B_j$.
If the condition of Lemma~\ref{lemma:dangerous-geometrically-decrease} does not hold, $j$ and $j'$
are obviously disjoint.
Therefore, we can see the family $\{B_{j}\}_{j\in D'}$ is {\em almost laminar}.
In fact, by slightly modifying the sets $B_{j}$,  we can form a laminar family.

Now, we present the algorithm for creating the laminar family $\mB$.
For any client $j \in D'$, we now construct a new set $B'_j \supseteq B_j$,
which is $B_j$ plus a small volume set of facilities.
Algorithm~\ref{alg:building-laminar-family} describes the process.
See Figure~\ref{fig:fig} for an illustration of our algorithm.
We prove that $\{B'_j\}_{j \in D'}$ forms a laminar family.

\begin{algorithm}
\caption{building a laminar family $\mB = \set{B'_j:j \in D'}$ of sets}
\label{alg:building-laminar-family}
\begin{algorithmic}[1]
\STATE \textbf{For} $r = 1$ to $R$ \textbf{do}
\STATE \hspace{\algorithmicindent} \textbf{For} each client $j \in D'$ such that $r_j = r$ \textbf{do}
\STATE \hspace{\algorithmicindent}\hspace{\algorithmicindent} Let $D''$ be the set of clients $j'$ such that $r_{j'} < r$ and $B'_{j'} \cap B_j \neq \emptyset$;
\STATE \hspace{\algorithmicindent}\hspace{\algorithmicindent}$B'_j \leftarrow B_j \cup \bigcup_{j' \in D''}B'_{j'}$; \label{STATE:create-Bp-j}
\end{algorithmic}
\end{algorithm}

\begin{lemma}
\label{lemma:laminar-family} The following properties hold for $\mB=\{B'_j\}_{j \in D'}$:
\begin{enumerate}
\item $B'_j \subseteq \ball(j, \dmax(j)/10)$ for every $j \in D'$;
\item $\mB=\{B'_j\}_{j \in D'}$ forms a laminar family.
\end{enumerate}
\end{lemma}

\begin{proof}
We prove  both the statements together by induction or $r$. We prove $B'_j \subseteq \ball(j, \dmax(j)/10)$ for any client $j$ such that $r_j \leq r$; also, the family $\mB_r=\{B'_j\}_{j \in D':r_j \leq r}$ form a laminar family.  If $r = 1$, we have $B'_j = B_j = \ball(j, \dmax(j)/15)$ for every $j \in D'$ with $r_j = 1$. Also, by Lemma~\ref{lemma:dangerous-clients-far-away}, $B'_j$ and $B'_{j'}$ are disjoint for two distinct clients $j$ and $j'$ in $D'$ with $r_j = r_{j'} = 1$.  Thus the statements are true for $r = 1$.

Suppose the statement is true for $r - 1$.  Consider two clients $j$ and $j'$ in $D'$ such that $r_j  = r, r_{j'} < r$ and $B_j \cap B'_{j'} \neq \emptyset$.  By the induction hypothesis, $B'_{j'} \subseteq \ball(j', \dmax(j')/10)$, implying $d(j, j') \leq \dmax(j)/15 + \dmax(j')/10$. By Lemma~\ref{lemma:dangerous-geometrically-decrease}, $\dmax(j') \leq \frac16\dmax(j)$.  Then, $d(j, j') + \dmax(j')/10 \leq \dmax(j)/15 + \dmax(j)/60 + \dmax(j')/60 = \dmax(j)/10$. Thus,
$$B'_{j'} \subseteq \ball(j, d(j, j') + \dmax(j')/10) \subseteq \ball(j, \dmax(j)/10).$$
This is true for any such client $j'$. By the definition of $B'_j$ at Line~\ref{STATE:create-Bp-j}, we have
that
\[
B'_j \subseteq \ball(j, \dmax(j)/10).
\]
Consider two distinct clients $j, j' \in D'$ such that $r_j = r_{j'} = r$.
We claim that there is no $j''$ such that $r_{j''} < r$ and $B'_{j''}$ intersect both $B_j$ and $B_{j'}$.
Assume there is such a client $j''$. Then, we have that
\begin{align*}
d(j, j'') & \leq \frac{\dmax(j)}{15} + \frac{\dmax(j'')}{10} \leq \frac{\dmax(j)}{12}.
\end{align*}
Similarly $d(j', j'') \leq \dmax(j')/12$. Thus, $d(j, j') \leq \dmax(j)/12 + \dmax(j')/12$.  Contradicting Lemma~\ref{lemma:dangerous-clients-far-away}.

Notice that in order to construct $B'_j$ at Line~\ref{STATE:create-Bp-j}, it is enough to consider
the sets in $\mB_{r-1}=\{B'_{j''} \mid j'' \in D' , r_{j''} \leq  r-1\}$ that are inclusively maximal
(those that are not properly contained by other set in $\mB_{r-1}$).
By the induction hypothesis, these inclusively maximal sets are disjoint.
Thus, for any clients $j, j' \in D'$ with $r_j = r_{j'} = r$,  $B'_j$ and $B'_{j'}$ are disjoint.
Moreover, for any $j'' \in D'$ with $r_{j''} < r$, either $B'_{j''} \subseteq B'_j$ or $B'_{j''} \cap B'_j  = \emptyset$.
Thus, the family $\mB_r=\{B'_j:j \in D', r_j \leq r\}$ is laminar.
\end{proof}

\subsection{Rounding}
\label{subset:rounding}
After obtaining a LP solution $\left(\set{y_i:i \in F}, \set{F_j : j \in C}\right)$, we run the algorithm of \cite{Yan12} as described in Section~\ref{subsec:review-yan} to obtain a family $\mU$ of disjoint bundles and the sets $\set{U_{j, t} : j \in C, t \in [r_j]}$. We then create the laminar family $\mB =
\{B'_j : j \in D'\}$ of sets.  Notice that by Lemma~\ref{lemma:laminar-family}, we have $\ball(j, \dmax(j)/15) = B_j \subseteq B'_j \subseteq \ball(j, \dmax(j)/10)$.  Thus, $r_j - 1 \leq y(B'_j) \leq r_j$.
Consider the polytope defined by the following set of constraints.
The set of variables is $\{z_i : i \in F\}$:
\begin{enumerate}
\item $\sum_{i \in U}z_i = 1 \quad\forall U \in \mU $
\item $r_j-1 \leq \sum_{i \in B'_j}z_i \leq r_j \quad\forall j \in D'$
\item $\sum_{i' \in g^{-1}(i)} z_{i'} \leq 1 \quad \forall i \in F $
\item $\sum_{i \in F}z_i  = k$
\end{enumerate}
From the construction of $B'_j$, it is easy to see that either $g^{-1}(i) \subseteq B'_j$ or $g^{-1}(i) \cap B'_j = \emptyset$
for any  $i\in F$ and  $j\in D'$.
Thus, $\mB \cup \set{F} \cup \set{g^{-1}(i):i \in F}$ forms a laminar family. The constraints of the above polytope is defined by two laminar families of sets : $\mU$ and $\mB \cup \set{F} \cup \set{g^{-1}(i):i \in F}$.
It is well known that such a polytope defined by two laminar families is integral.
Also, notice that the $z_i = y_i$ for every $i \in F$ is  a feasible solution.
Thus, we can express our vector $y$ as a convex combination of vertices of the above polytope.
Such a convex combination can be computed in polynomial time.
Treating the coefficients in the convex combination as probabilities (note that the coefficients sum up to $1$),
we sample a random vertex.
Due to the last constraint, the vertex contains exact $k$ open facilities.
Let $S$ be the set of $k$ facilities defined by the vertex.
We summarize the useful properties of our rounding step as follows.
\begin{enumerate}
\item The probability that each facility $i \in F$ is open is exactly $y_i$;
\item For any $i\in F$, we open at most 1 facility inside $g^{-1}(i)$;
\item We open exactly 1 facility inside each $U \in \mU$;
\item For each $j \in D'$, we open either $r_j-1$ or $r_j$ facilities in $B'_j$.
Moreover, we have that
\begin{eqnarray*}
\Prob[r_j \text{ facilities are open in }B'_j] &=& y(B'_j)-(r_j-1)\text{ and }\\
\Prob[r_j-1 \text{ facilities are open in }B'_j] &=& r_j- y(B'_j)
\end{eqnarray*}
\end{enumerate}

\subsection{Analysis}
\label{subsec:proof-of-constant}

We now have every piece ready to prove a constant factor approximation for \ftm.
Each of the following lemmas deals with one type of clients.
First, we consider safe clients.
\begin{lemma}
\label{lemma:bound-for-C-minus-D}
For any client $j \in C\setminus D$ with $r_j = r$, the expected connection cost of $j$ is at most $93r\dav(j)$.
\end{lemma}
\begin{proof}
Notice that we always open 1 facility inside $U_{j,t}$ for every $t \in [r]$. We connect $j$ to the $r$ facilities in $\bigcup_{t \in [r]}U_{j,t}$. Connecting $j$ to the facility in $U_{j,t}$ costs at most $2\dmax^t(j) + \dav^t(j)$ in expectation, by Lemma~\ref{lemma:close-to-bundles}. Thus, the expected connection cost of $j$ is at most
\ifdefined \stoc
\begin{align*}
&\ \ \ \sum_{t=1}^r\left(2\dmax^t(j) + \dav^t(j)\right) \\
&\leq 2\sum_{t=1}^{r-1}\dav^{t+1}(j) + 2\dmax(j) + \sum_{t=1}^r \dav^t(j)\\
&\leq 3r\dav(j) + 2\dmax(j) \\
&\leq 3r\dav(j) + 2\times 45\dav^r(j) \leq 93r\dav(j),
\end{align*}
\else
\begin{align*}
\sum_{t=1}^r\left(2\dmax^t(j) + \dav^t(j)\right) \leq 2\sum_{t=1}^{r-1}\dav^{t+1}(j) + 2\dmax(j) + \sum_{t=1}^r \dav^t(j)\\
\leq 3r\dav(j) + 2\dmax(j) \leq 3r\dav(j) + 2\times 45\dav^r(j) \leq 93r\dav(j),
\end{align*}
\fi
where the first inequality used the fact that $\dmax^t(j) \leq \dav^{t+1}(j)$.
\end{proof}

\begin{lemma}
\label{lemma:bound-for-Dp}
For any client $j \in D'$ with $r_j = r$, the expected connection cost of $j$ is at most $46r\dav(j)$.
\end{lemma}
\begin{proof}
Notice that by Lemma~\ref{lemma:close-to-bundles}, the distance from $j$ to its $r$-th closest open facility is always at most $3\dmax(j)$.   We can bound the expected connection cost of $j$ as follows. If there are $r_j$ open facilities inside $B'_j$, we connect $j$ to the $r$ open facilities; otherwise (they are $r - 1$ open facilities),  we connect $j$ to the $r-1$ open facilities in $B'_j$ and a $r$-th open facility outside $B'_j$ whose distance to $j$ can be bounded by $3\dmax(j)$.  Thus, the expected connection cost of $j$ is at most
\ifdefined \stoc
\begin{align*}
&\ \ \ \sum_{i \in B'_j}d(j, i) y_i + \Prob[r_j-1 \text{ facilities are open in }B'_j] \times 3\dmax(j)\\
&\leq r\dav(j) + 3(r - y(B_j))\dmax(j) \\
&\leq r\dav(j) + 3\times 15\dav^r(j) \leq 46r\dav(j),
\end{align*}
\else
\begin{align*}
&\sum_{i \in B'_j}d(j, i) y_i + \Prob[r_j-1 \text{ facilities are open in }B'_j] \times 3\dmax(j) \leq r\dav(j) + 3(r - y(B_j))\dmax(j) \\
&\leq r\dav(j) + 3\times 15\dav^r(j) \leq 46r\dav(j),
\end{align*}
\fi
where the second inequality follows from Lemma~\ref{lemma:B-j-large}.
\end{proof}

\begin{lemma}
\label{lemma:bound-for-D-minus-Dp}
For any client $j \in D \setminus D'$ with $r_j = r$, the expected connection cost of $j$ is at most $52r\dav(j)$.
\end{lemma}
\begin{proof}
There is a $j' \in D'$ such that $r_j = r_{j'} = r, \dav(j') \leq \dav(j)$ and $d(j, j') \leq 6\dav(j)$.  By Lemma~\ref{lemma:bound-for-Dp}, the expected connection cost of $j'$ is at most $46r\dav(j')$. By triangle inequality, the expected connection cost of $j$ is at most
$
46r\dav(j') + rd(j, j') \leq 46r\dav(j) + 6r\dav(j) = 52r\dav(j).
$
\end{proof}

Combining Lemma~\ref{lemma:bound-for-C-minus-D}, \ref{lemma:bound-for-Dp} and \ref{lemma:bound-for-D-minus-Dp}, the expected connection cost of any client $j \in C$ is at most $93r\dav(j)$, leading to a $93$-approximation for \ftm.

\section{{\large \ftm} on Paths and HSTs}

We first consider the case where
all the facilities and clients are on a line.

\begin{theorem}
For the non-uniform \ftm\ on a line metric, the problem can be solved exactly in polynomial time.
\end{theorem}

In fact,  all we need is to show the linear program \eqref{lp:ftm} has an integral optimal solution. Unlike in the usual case, we can not show that the polytope defined by the LP constraints is integral. In fact, the polytope is the same as that for the general NP-hard $k$-median problem, thus not integral. The integral optimum is due to the specialty of the cost coefficients, i.e., $\dist(i,j)$.

\begin{lemma}
\label{lm:line}
If $\dist(i,j)$s are defined by a line metric,
the linear program \eqref{lp:ftm} always has an integer optimal solution.
\end{lemma}
\begin{proof}
We show for any fractional optimal solution $(x_{i,j},y_i)$,
we can construct an integral solution with the same cost.
By the splitting trick
\footnote{
Consider facility $i$.
Let $J_l$ be the set of clients on the left side of $i$ and $J_r$ the set of clients
on the right side.
Consider the numbers $\{x_{i,j}\}_{j\in J_1}\cup \{y_i-x_{i,j}\}_{j\in J_2}$.
These numbers split the interval $[0,y_i]$ into several pieces, and for each piece,
we create a facility with fractional value equal to the length of that piece.
},
we can assume that $x_{i,j}=\{0, y_i\}$.
Each client (fractionally) connects to a consecutive segment of facilities.
Suppose $i$ is needed by demands set $J$.


Now we can write another linear program without $x_{i, j}$ variables as follows.
We use $i'$ for indexing the facilities after the split and $i$ for original facility.
We write $i'\in \mathsf{sp}(i)$ to indicate that the new facility $i'$ is derived from
the original facility $i$.
Let $F_j$ be the set of facilities serving $j$ (after the splitting process).
The facilities in $F_j$ form a consecutive segment in the path.

\begin{align}
\label{lp:new}
\textrm{minimize}\qquad & \sum_{j}\sum_{i'\in F_j}\dist(i', j)y_{i'}\\
\textrm{subject to}\qquad & \sum_{i'\in F_j}y_{i'} \ge r, \forall j \notag\\
& \sum_{i'\in \mathsf{sp}(i)}y_{i'} \leq 1, \forall i \notag \\
& \sum_{i'\in F}y_{i'} \leq k, \forall i \notag
\end{align}

It is easy to see that the optimal solution for the new LP is no more than that for the original LP.
The constraint matrix of the new LP has the consecutive ``one"s property:
in each row of the constraint matrix, the ``1"s appear in consecutive positions.
Such matrices are known to be totally unimodular
and the corresponding linear program has an integral optimal solution.
(See e.g.,\cite{schrijverbook}).
Furthermore, it is easy to see any integral feasible solution of \eqref{lp:new}
corresponds to a feasible solution for \ftm\ with the same cost.
Therefore, the optimal integral solution of \eqref{lp:new} has to be the same
as that of \eqref{lp:ftm}. The above argument also gives us an algorithm
to construct an integral solution of \eqref{lp:ftm} of the optimal cost.
\end{proof}

Using the same idea, we can get a polynomial time algorithm on
an HST metric where all facilities and clients are located at leaves.
We recall an HST (hierarchically well separated tree)
is a tree where on any root to leaf path, the edge lengths
decrease by some fixed factor in each step.

\begin{lemma}
\label{lm:hst}
The general \ftm\ problem can be solved exactly in polynomial time on
an HST metric where all facilities and clients are located at leaves.
\end{lemma}
\begin{proof}
We use $LCA(j_1, j_2)$ to denote the least common ancestor of leaves $j_1$ and $j_2$.
Suppose the leaves of the HST are ordered according the preorder traversal.
Consider a client $j$ and suppose the path from $j$ to the root is $\{j, p_1, p_2,\ldots, r\}$.
In a fractional optimal solution $(x_{i,j}, y_i)$ of \eqref{lp:ftm},
client $j$ chooses to connect all the facilities in the subtree rooted at $p_1$,
then those at $p_2$, and so on.
For any leaves $j_1,j_2,j_3$, if $LCA(j_1, j_2) = LCA(j_1, j_3)$,
we can easily see that $\dist_T(j_1, j_2) = \dist_T(j_1, j_3)$.
Therefore, we can assume $j$ connects to a consecutive segment of facilities
(in the preorder sequence of the facilities).
Using almost the same argument as in Lemma~\ref{lm:line},
we can show that the LP has an integral solution with the optimal value.
\end{proof}

Note that combining this result with classic tree embedding result~\cite{bartal1998approximating,JSK2003treeembed},
we can easily get a simple $O(\log{n})$-approximation for general \ftm\ on any metric.
Since we have already shown a constant approximation for general \ftm, we omit the details.

\ifdefined\fullpaper\else
\bibliographystyle{alpha}
\bibliography{cluster}
\end{document}
\fi

\ifdefined\fullpaper\else
\documentclass[11pt]{article}

\begin{document}
\fi

\section{Fault Tolerant Facility Location}
\label{sec:ftfl}

For \ftfl\ problem with arbitrary weights, we have a set $F$ of $n$ facilities
and a set $C$ of $m$ clients.
In the following sections, the terms ``demand" and ``client" are used interchangeably.
For each client $j$, there is
a nonnegative weight vector $\bw_j=\{w_j^{(1)},\ldots, w_j^{(r_j)}\}$
for some $r_j\leq n$.
Assume that the set of open facilities are $i_1, i_2, \ldots, i_h$ for some $1\leq h\leq n$,
sorted according to the nondecreasing order of their distance to $j$.
The service cost of client $j$ is $\sum_{t=1}^{r_j} w_j^{(t)} \dist(i_t, j)$.
If $h<r_j$, the service cost of $j$ is infinity.

We focus on a special case of the above problem where only one entry of the vector $\bw_j$ is nonzero. For ease of notation, we use $r_j$ to denote the index of the nonzero coordinate in $\bw_j$ and $w_j$ to denote $w_j^{(r_j)}$,  i.e., $w^{(r_j)}_{j} > 0$ and $w^{(t)}_{j}=0$ for any $t\ne r_j$.
Indeed, considering this special case is without loss of generality
since we can create multiple copies for each demand node $j$,
with the $1$st copy associated with the weight vector $\{w^{(1)}_j,0, \ldots, 0\}$,
the $2$nd copy $\{0, w^{(2)}_j, \ldots, 0\}$ and so on.
It is straightforward to establish the equivalence and we omit the proof here.
From now on, we use \ftfl\ to denote this special case of the fault
tolerant facility location problem.
Our main result is a constant factor approximation algorithm for \ftfl.

First, we note that the most natural linear integer programming formulation that was used for nonincreasing weight vectors in previous work does not work any more. 

Hence, we use a different linear integer programming formulation as follows.
We use boolean variable $y_i$ to denote whether facility $i$ is open,
$x_{ij}$ to denote whether demand $j$ is assigned to facility $i$.
We use $\pi(j,t)$ to denote the $t$th facility closest to $j$.
Let $N(j,t)=\{\pi(j,1), \pi(j,2),\ldots, \pi(j,t)\}$
and $c_{jt}=\dist(j, \pi(j,t))$.
Let $c_{j0}=0$ for all $j$.
We use indicator variable $z_{jt}$ to denote the event
whether demand $j$ is satisfied by
$N(j,t)$ (i.e., at least $r_j$ facilities among $N(j,t)$ are opened).
\begin{align}
\textrm{minimize} \ \ \ \ & \sum_{i}f_i y_i + \sum_j w_j \sum_{t\geq 0} (1-z_{jt}) (c_{j(t+1)}-c_{jt})  \label{lp:ftfl}
\end{align}
\ifdefined\stoc
\vspace{-10pt}
\fi
\begin{alignat}{2}
\textrm{s.t.}\ \ \ \  \sum_i x_{ij} &\geq r_j, &\quad \forall j\in C \label{serviceconst}\\
y_i &\geq x_{ij},  &\quad \forall i, j\in C \label{xyconst}\\
\sum_{i\in N(j,t)} x_{ij} &\geq r_jz_{jt}  &\quad \forall j\in C, \forall t\in [n]  \label{eq:gapconstraint} \\
y_i, x_{ij}, z_{jt} &\in \{0,1\}, &\quad\forall i\in F,  j\in C,  t\in [n]\cup\{0\}
\end{alignat}

First, we need to explain our objective function
since it is not the most frequently used objective
for facility location.
It is easy to see that a feasible solution of \ftfl\ satisfies the IP formulation.
For any optimal solution of the IP, if $N(j, t)$ satisfies $j$,
$N(j, t')$ also satisfies $j$ for $t'\geq t$.
Therefore, $z_{jt} \ge z_{j(t - 1)}$ for all $t$.
If $t'$ is the smallest $t$ such that $z_{jt}=1$, we can see that
$w_j\sum_{t\geq 0}(1 - z_{jt})(c_{j(t+1)}-c_{jt})$ is equal to
$w_jc_{jt'}$, which is exactly the service cost of $j$. We set $c_{j(n+1)}=\infty$.
Constraints \ref{serviceconst} specify that client $j$ must be connected to $r_{j}$ facilities.
Constraints \ref{xyconst} ensure that a client is connected only to open facilities and
constraints \ref{eq:gapconstraint} imply that if $z_{jt}=1$ then at least $r_j$ facilities
must be open in $N(j,t)$.
The LP relaxation is obtained by replacing last constraints
by $y_i, x_{ij}, z_{jt} \in [0,1]$.

However, we can not use the above LP directly to get a constant factor approximation algorithm
since its integrality gap is large and can be as large as $\Omega(n)$.
Consider the following \ftfl\ instance in a line metric.
There are $n$ facilities and only one client.
All facilities have cost zero and the client have demand $n$ (i.e., $r_1=n$).
The $x$-coordinate of the client is 0.
The $x$-coordinate of the $i$th facility is $0$ for all $1\leq i\leq n-1$
and the $x$-coordinate of the $n$th facility is $n$.
The optimal integral solution opens all facilities and the service cost is $n$.
A feasible fractional solution opens all facilities too.
However, $z_{jt}$ can take fractional values  $\frac{1}{n}\sum_{i\in N(j, t)}x_{ij}=\frac{t}{n}$.
The fractional service cost of the client is $\frac{n-1}{n}\cdot 0 +\ldots +\frac{2}{n}\cdot 0 +\frac{1}{n}\cdot n=1$.
Therefore, we obtain an integrality gap of $\Omega(n)$.

To strengthen the LP relaxation, we use the following {\em knapsack cover constraints}
to replace constraints (\ref{eq:gapconstraint}):
\begin{align}
\label{knapsackconst}
\sum_{i\in N(j,t)\setminus A} x_{ij}\geq (r_j-|A|)z_{jt} , \ \quad\ \ \forall j\in C,  t\in [n],  A\subseteq N(j,t)
\end{align}
The constraints require that if $z_{jt}=1$, then for every subset $A$, at least $r_j-|A|$ facilities
from the set $N(j,t)\setminus A$ must be chosen to serve $j$.
We can also see that there is a polynomial time separation oracle for \eqref{knapsackconst}:
Suppose $(x_{ij}, z_{jt})$ is a solution. For fixed $t$ and $j$,
we can test the feasibility of \eqref{knapsackconst} for all $A$ with $|A|=k$
by checking whether the sum of the smallest $|N(j,t)|-k$ terms in $N(j,t)$
is at least $(r_j-k)z_{jt}$.
Therefore, the relaxation can be solved optimally in polynomial time by the ellipsoid algorithm.
Let $(x^*,y^*,z^*)$ be the optimal fractional solution of the linear program
and $\opt$ be the optimal value.

\eat{
We first make two simple yet useful observations.

\begin{observation}
Without loss of generality, we can assume that,  for any demand $j$ and $1\leq t'\leq t\leq n$,
we have that $z_{jt'}\leq z_{jt}$.
\end{observation}

\begin{observation}
Without loss of generality, we can assume that, for any demand $j$ and $h>0$,
if $\sum_{i\in N(j,h)} y_i \leq r_j$, then $x_{hj}=y_h$.
\end{observation}
}
Now, we round the fractional solution $(x^*,y^*,z^*)$ to an integral solution $(\hx,\hy,\hz)$ as follows.
Let us consider a particular demand $j$.
Let $\alpha<1$ be a constant fixed later.
Let $t^*_j$ be the smallest integer $t$ such that $z^*_{jt}\geq \alpha$.

\begin{lemma}
\label{lm:cjt}
For every $j$, it holds that $c_{jt^*_j} \leq \frac{1}{1-\alpha}\sum_{t=0}^{n-1} (1-z^*_{jt}) (c_{j(t+1)}-c_{jt}).$
\end{lemma}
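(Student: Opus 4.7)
The plan is to prove the inequality by keeping only the terms of the sum with $t < t^*_j$ on the right-hand side, bounding each factor $(1 - z^*_{jt})$ from below using the definition of $t^*_j$, and then telescoping the resulting sum of distance differences.

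More concretely, first I would observe that every term in the sum $\sum_{t=0}^{n-1}(1 - z^*_{jt})(c_{j(t+1)} - c_{jt})$ is non-negative, since $z^*_{jt} \in [0,1]$ and the $c_{jt}$ are non-decreasing by definition of $\pi(j,\cdot)$. Thus we may lower-bound the full sum by its restriction to $t \in \{0, 1, \ldots, t^*_j - 1\}$. Second, by the minimality in the definition of $t^*_j$, for every $t < t^*_j$ we have $z^*_{jt} < \alpha$, hence $(1 - z^*_{jt}) > (1 - \alpha)$; factoring $(1-\alpha)$ out gives
\begin{equation*}
\sum_{t=0}^{n-1}(1 - z^*_{jt})(c_{j(t+1)} - c_{jt}) \;\geq\; (1 - \alpha)\sum_{t=0}^{t^*_j - 1}(c_{j(t+1)} - c_{jt}).
\end{equation*}

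Third, the remaining sum telescopes to $c_{jt^*_j} - c_{j0} = c_{jt^*_j}$, using the convention $c_{j0} = 0$ stated in the setup. Rearranging yields the claimed bound. There is no real obstacle here; the only thing to be careful about is that the argument uses just the definition of $t^*_j$ and the non-negativity of the dropped tail, and does not require the monotonicity of $z^*_{jt}$ in $t$, so it applies to the fractional optimum $(x^*, y^*, z^*)$ directly.
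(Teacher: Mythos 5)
Your proof is correct and is essentially the same computation as the paper's: restrict the sum to $t < t^*_j$, bound $1-z^*_{jt} \geq 1-\alpha$ via the minimality of $t^*_j$, and telescope using $c_{j0}=0$. The only difference is that the paper justifies the key inequality by appealing to the (without loss of generality) monotonicity $z^*_{jt} \geq z^*_{j(t-1)}$, whereas you correctly observe that nonnegativity of each dropped term (from $z^*_{jt} \leq 1$ and $c_{j(t+1)} \geq c_{jt}$) together with the definition of $t^*_j$ already suffices, which is a slightly cleaner justification of the same step.
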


\begin{proof}
\begin{align*}
&\frac{1}{1-\alpha}\sum_{t=0}^{n-1} (1-z^*_{jt}) (c_{j(t+1)}-c_{jt}) \\
&\geq \frac{1}{1-\alpha}\sum_{t=0}^{t^*_j-1} (1-\alpha) (c_{j(t+1)}-c_{jt})  =c_{jt^*_j}.
\end{align*}
The first inequality follows because $z^*_{jt} \ge z^*_{j(t - 1)}$ for all $t$. This is true because if we set $z^*_{j,t}=max{z^*_{j,1},...,z^*_{j,t}}$, it yields a feasible solution of no greater cost.
\end{proof}


\noindent
Now, we create a set of $\ty_i$ values that we will round,
based on the $y^*_i$ values, as follows.

\begin{enumerate}
\item For all facility $i$ with $y^*_i\geq \alpha$, we round it up to $1$, i.e., $\ty_i=1$.
\item For all facility $i$ with $y^*_i< \alpha$, we let $\ty_i=\frac{1}{\alpha} y^*_i$.
\end{enumerate}

\begin{lemma}
For each client $j$, $\sum_{i\in N(j,t^*_j)} \ty_i \geq r_j$.
\end{lemma}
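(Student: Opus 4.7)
The plan is to invoke the knapsack cover constraint \eqref{knapsackconst} with the specific subset $A := \{i \in N(j, t^*_j) : y^*_i \geq \alpha\}$, which mirrors the rounding rule: $A$ is precisely the set of facilities in $N(j, t^*_j)$ whose $\ty$-value is forced to $1$ in step~1, while each $i \in N(j, t^*_j) \setminus A$ contributes $\ty_i = y^*_i/\alpha$.

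First I would decompose the target quantity as
$$\sum_{i\in N(j,t^*_j)} \ty_i \;=\; |A| \;+\; \frac{1}{\alpha}\sum_{i\in N(j,t^*_j)\setminus A} y^*_i.$$
Next I would bound the residual sum by chaining three inequalities: constraint \eqref{xyconst} gives $y^*_i \geq x^*_{ij}$, the knapsack cover constraint \eqref{knapsackconst} at $t = t^*_j$ with this $A$ gives $\sum_{i\in N(j,t^*_j)\setminus A} x^*_{ij} \geq (r_j - |A|)\, z^*_{jt^*_j}$, and the definition of $t^*_j$ gives $z^*_{jt^*_j}\geq \alpha$. Chaining these yields
$$\sum_{i\in N(j,t^*_j)\setminus A} y^*_i \;\geq\; (r_j - |A|)\,\alpha,$$
and dividing by $\alpha$ and adding $|A|$ finishes the bound $\sum_{i\in N(j,t^*_j)}\ty_i \geq r_j$.

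One minor case to handle is that the knapsack cover constraint is only meaningful for $|A| \leq r_j$; if $|A| > r_j$ the right-hand side $(r_j - |A|)\alpha$ is nonpositive and the displayed argument is vacuous, but then the contribution $|A|$ from the fully opened facilities alone already exceeds $r_j$, so the conclusion holds trivially. I do not anticipate a serious technical obstacle here: the crux is spotting that one should apply the cover constraint at exactly the same threshold $\alpha$ used in the rounding, which aligns the two terms and produces an $r_j - |A|$ contribution after dividing by $\alpha$. This is also the step that would fail if we only had the weaker constraint \eqref{eq:gapconstraint}, consistent with the $\Omega(n)$ integrality gap instance described above.
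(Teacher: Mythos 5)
Your proof is correct and takes essentially the same route as the paper: apply the knapsack cover constraint \eqref{knapsackconst} at $t=t^*_j$ with a thresholded set $A$ removed, chain $y^*_i \geq x^*_{ij}$ and $z^*_{jt^*_j}\geq \alpha$, and add back the contribution of the facilities rounded up to $1$. The only difference is that the paper defines $A$ by $x^*_{ij}\geq\alpha$ rather than $y^*_i\geq\alpha$; your choice matches the rounding rule for $\ty_i$ exactly, so the split $\sum_{i\in N(j,t^*_j)}\ty_i = |A| + \frac{1}{\alpha}\sum_{i\in N(j,t^*_j)\setminus A}y^*_i$ is an identity, which is if anything slightly cleaner.
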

\begin{proof}
Consider a particular client $j$.
Let $A$ be the set of facility $i$ such that $x^*_{ij}\geq \alpha$ and $i\in N(j,t^*_j)$.
From \eqref{knapsackconst}, we know that
\ifdefined \stoc
\begin{align*}
&\ \sum_{i\in N(j,t^*_j)\setminus A} y^*_{i}\,\,\,\geq \,\,\,
\sum_{i\in N(j,t^*_j)\setminus A} x^*_{ij}\,\,\\
&\geq\,\, z^*_{jt^*_j}(r_j-|A|) \,\,
\geq\,\, \alpha (r_j-|A|).
\end{align*}
\else
\[
\sum_{i\in N(j,t^*_j)\setminus A} y^*_{i}\,\,\,\geq \,\,\,
\sum_{i\in N(j,t^*_j)\setminus A} x^*_{ij}\,\,\geq\,\, z^*_{jt^*_j}(r_j-|A|) \,\,\geq\,\, \alpha (r_j-|A|).
\]
\fi
Therefore, we can see that
$$
\sum_{i\in N(j,t^*_j)\setminus A} \ty_{i} \,\, \geq\,\,
\sum_{i\in N(j,t^*_j)\setminus A} \frac{1}{\alpha} y^*_{i}
\,\, \geq \,\,r_j-|A|.
$$
For each facility $i\in A$, we have $\ty_{i}=1$. Hence,
$
\sum_{i\in N(j,t^*_j)} \ty_{i}\geq r_j,
$
which completes the proof.
\end{proof}

Now, we round the $\ty$ values to integers.
Our rounding scheme is a slight variant of the one in \cite{swamy2008fault}.
Let $F_j= N(j, t^*_j)$.
Let $r'_j$ be the residual requirement of $j$, which is initially set to be $r_j$.
We iterate the following steps until no client remains in the graph.

\begin{itemize}
\item[S1.] We pick the client $j$ with the minimum $c_{jt^*_j}$.
\item[S2.] Let $M\subseteq F_j$ be the set of the cheapest facilities in $F_j$ (w.r.t. facility opening costs)
such that $\sum_{i\in M} \ty_i \geq r'_j$.
If $\sum_{i\in M}\ty_i$ is strictly large than $r'_j$, we replace the last facility, say facility $i$,
by two ``clones" $i_1$ and $i_2$.
Set $\ty_{i_1}= r'_j-\sum_{i\in M\setminus \{i\}} \ty_i$
and
$\ty_{i_2}=\ty_i-\ty_{i_1}$.
Include $i_1$ in $M$. Hence, $\sum_{i=M}y_i=r'_j$.
\item[S3.] Open the $r'_j$ cheapest facilities in $M$.
For each client $k$ with $F_k\cap M\ne \emptyset$,
we use any $\min(r'_k, r'_j)$ of the facilities we just opened to serve $k$
and let $r'_k=r'_k-\min(r'_k, r'_j)$.
Delete facilities in $M$
and all clients with zero residual requirement from the input.
\end{itemize}

\begin{lemma}
\label{lm:rounding}
The above rounding scheme returns a feasible solution. Moreover,
the following properties hold.
\begin{enumerate}
\item The facility opening cost is at most $\sum_i f_i \ty_i$.
\item For each client $j$, at least $r_j$
facilities in $B(j, 3c_{jt^*_j} )$ are open.
\end{enumerate}
\end{lemma}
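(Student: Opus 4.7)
The plan is to verify that the rounding is well-defined, terminates with a feasible solution, and achieves both stated guarantees. The cornerstone is the following invariant: for every client $k$ still present in the current instance with residual requirement $r'_k$, one has $\sum_{i\in F_k}\ty_i \ge r'_k$, where the sum ranges over facilities (and clones) currently in the pool. The previous lemma gives this at the start. When iteration $j$ deletes $M\subseteq F_j$, the $\ty$-mass removed from $F_k$ is at most $\sum_{i\in M}\ty_i = r'_j$, while $r'_k$ simultaneously drops by $\min(r'_k,r'_j)$ whenever $F_k\cap M\ne\emptyset$; in every case the quantity $\sum_{i\in F_k}\ty_i-r'_k$ is non-increasing, so the invariant persists. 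This guarantees that step S2 can always produce $M$, and since picking $k$ itself assigns $r'_k$ newly opened facilities to $k$, every client eventually reaches $r'_k=0$ and the final solution is feasible.

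For Property 1 (facility opening cost), I would argue iteration by iteration. Order the facilities of $M$ by nondecreasing cost $f_{i_1}\le\cdots\le f_{i_{|M|}}$. Since each $\ty_{i_\ell}\le 1$ and $\sum_\ell \ty_{i_\ell}=r'_j$, a short rearrangement (using $f_{i_\ell}\le f_{i_{r'_j}}$ on the first $r'_j$ terms and $f_{i_\ell}\ge f_{i_{r'_j}}$ on the rest) yields $\sum_{\ell=1}^{r'_j} f_{i_\ell} \le \sum_{\ell=1}^{|M|} f_{i_\ell}\ty_{i_\ell}$, so the opening cost paid in iteration $j$ is bounded by the $\ty$-weighted cost of $M$. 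Summing over iterations and noting that the clones of any original facility $i$ partition its $\ty_i$-mass (and that an already-open facility incurs no additional charge if a surviving clone of it is opened again later) gives the desired bound $\sum_i f_i\ty_i$.

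For Property 2 (distance bound), the key is that step S1 always picks the client $j$ of minimum $c_{jt^*_j}$ among those remaining, so whenever $k$ is still present and iteration $j$ runs, $c_{jt^*_j}\le c_{kt^*_k}$. Any opened facility lies in $M\subseteq F_j=N(j,t^*_j)$, hence within distance $c_{jt^*_j}$ of $j$. Should $F_k\cap M\ne\emptyset$, picking any shared $i^*$ gives $\dist(j,i^*)\le c_{jt^*_j}$ and $\dist(k,i^*)\le c_{kt^*_k}$, so $\dist(j,k)\le c_{jt^*_j}+c_{kt^*_k}$ by the triangle inequality, and hence any opened facility lies within $c_{jt^*_j}+\dist(j,k)\le 2c_{jt^*_j}+c_{kt^*_k}\le 3c_{kt^*_k}$ of $k$. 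Since the $r_k$ facilities ultimately credited to $k$ are all opened in iterations with $F_k\cap M\ne\emptyset$, they all lie in $B(k,3c_{kt^*_k})$.

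The main obstacle I expect is the bookkeeping across cloning: making sure both the invariant and the per-iteration cost argument transfer cleanly when an original facility $i$ is split into $i_1,i_2$ across several iterations, and that already-opened facilities reappearing as surviving clones cause no double-counting in the cost telescoping.
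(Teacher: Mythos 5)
Your overall route --- the invariant $\sum_{i\in F_k}\ty_i\ge r'_k$, a per-iteration charge of the opened facilities against $\sum_{i\in M}f_i\ty_i$, and the $2c_{jt^*_j}+c_{kt^*_k}\le 3c_{kt^*_k}$ triangle-inequality argument driven by the minimum-$c_{jt^*_j}$ selection rule --- is the same as the paper's, and those parts are sound (your rearrangement argument for the per-iteration cost is actually more explicit than the paper's). Two issues, one minor and one substantive. The minor one: your invariant bookkeeping is stated backwards. What you need is that for every \emph{surviving} client $k$ the slack $\sum_{i\in F_k}\ty_i-r'_k$ does not decrease: when $F_k\cap M\ne\emptyset$ and $r'_k>r'_j$, the mass removed from $F_k$ is at most $\sum_{i\in M}\ty_i=r'_j$, which equals the drop in $r'_k$; when $r'_k\le r'_j$ the client is deleted and nothing needs maintaining. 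Saying the slack is ``non-increasing, so the invariant persists'' is a non sequitur, though the ingredients you list give the correct statement once the direction is fixed.

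The substantive gap: you never rule out that two clones of the same physical facility get opened in different iterations, and your parenthetical ``an already-open facility incurs no additional charge if a surviving clone of it is opened again later'' shows you treated this purely as a cost-accounting matter. It is not: if a surviving clone of an already-opened facility is opened again and used to decrement $r'_k$, then $k$'s residual requirement can reach $0$ while fewer than $r_k$ \emph{distinct} facilities have been opened for it, so both feasibility and Property 2 (at least $r_j$ open facilities in $B(j,3c_{jt^*_j})$) fail, since fault tolerance counts distinct facilities. The paper's proof hinges on exactly this point: when facility $i$ is split in step S2, $i$ is the most expensive facility of $M$, and because every $\ty_{i'}\le 1$ while $\sum_{i'\in M}\ty_{i'}>r'_j$, the set $M$ contains at least $r'_j$ facilities cheaper than $i$; hence the clone $i_1$ placed into $M$ is never among the $r'_j$ cheapest opened in step S3 and is deleted unopened, so each original facility is opened at most once and the facilities credited to any client are pairwise distinct. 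This argument (or some substitute) is missing from your proof and is required before your final sentence of the distance argument can deliver ``at least $r_j$ facilities in $B(j,3c_{jt^*_j})$ are open.''
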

\begin{proof}
The proof is almost the same as the one in \cite{swamy2008fault}.
For completeness, we include it here.
Consider a particular iteration.
It is easy to see the invariant $\sum_{i\in F_j}\ty_i\geq r'_j$ is maintained
throughout the three steps. So it is always possible to choose the set $M$.
We also need to argue that no facility is opened twice since
we have made some clones.
We argue that whenever a facility $i$ is replaced by two clones,
the first clone never gets opened:
This is simply because $i$ is the most expensive facility in $M$
and there are at least $r'_j$ facilities cheaper than $i$
(otherwise, we do not have to make clones).

To bound the facility cost, just notice that
the cost of open facilities in $M$ is less than $\sum_{i\in M} f_i \ty_i$.
This proves (1).
To bound the connection cost, consider a particular client $j$.
Any opened facility in $F_j$ is at most $c_{jt^*_j}$ distance away from $j$.
Notice that $j$ may be served by some facilities in $F_{k}$ for some other client $k$.
This only happens if $F_j\cap F_{k}\ne \emptyset$ and $c_{kt^*_k}\leq c_{jt^*_j}$
(we process client $k$ first).
A facility in $F_k$ is at most $2c_{kt^*_k}+ c_{jt^*_j}\leq 3c_{jt^*_j}$ away from $j$.
\end{proof}

From Lemma~\ref{lm:rounding}, we know
that the first $r_j$ copies of client $j$ are assigned within a distance of $3 c_{jt_j^*}$.
Therefore, we have that the total cost of this integral solution
\begin{align*}
\sol &\leq \frac{1}{\alpha} \sum_i f_i y^*_i + 3\sum_j w_j c_{j(t^*_j)}\\
&\leq \frac{1}{\alpha} \sum_i f_i y^*_i + \frac{3}{1-\alpha}  \sum_j w_j \sum_t (1-z^*_{jt}) (c_{j(t+1)}-c_{jt})
\end{align*}
where the second inequality holds because of Lemma~\ref{lm:cjt}.

Setting $\alpha=\frac{1}{4}$ gives us an approximation ratio of $4$.
We can choose a random $\alpha$ to improve the approximation ratio as in \cite{shmoys1997approximation,guha2003constant}.
Let $L_j(\alpha)$
be $c_{jt}$ for the minimal $t$ such that $z_{jt}>\alpha$.
It is easy to see the following.
\begin{lemma}
$$
\int_{0}^{1}L_j(\alpha) \d\alpha = \sum_t (1-z^*_{jt}) (c_{j(t+1)}-c_{jt}).
$$
\end{lemma}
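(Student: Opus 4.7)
The plan is to exploit the monotonicity $z^*_{j0} \le z^*_{j1} \le \cdots \le z^*_{jn}$ (established in the proof of Lemma~\ref{lm:cjt}) together with the natural boundary values $z^*_{j0}=0$ and $z^*_{jn}=1$. The first assertion is immediate because $N(j,0)=\emptyset$ forces $z_{j0}=0$, and the second holds at any optimal solution because $c_{j(n+1)}=\infty$ would otherwise make the objective infinite. With these in hand, the function $L_j(\alpha)$ is a piecewise constant (step) function of $\alpha$.

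Concretely, I would first identify the level sets of $L_j$. Since $t \mapsto z^*_{jt}$ is nondecreasing, the smallest $t$ with $z^*_{jt} > \alpha$ equals a fixed integer $t$ precisely when $\alpha \in [\,z^*_{j(t-1)}, z^*_{jt})$. Thus $L_j(\alpha)=c_{jt}$ on that interval, and
\[
\int_0^1 L_j(\alpha)\,\d\alpha \;=\; \sum_{t=1}^{n} c_{jt}\,\bigl(z^*_{jt}-z^*_{j(t-1)}\bigr),
\]
where the boundary value $z^*_{jn}=1$ ensures that the whole interval $[0,1)$ is covered.

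Next I would perform Abel (summation-by-parts) rearrangement: split the sum as $\sum_{t=1}^n c_{jt}z^*_{jt}-\sum_{t=1}^n c_{jt}z^*_{j(t-1)}$, reindex the second by $s=t-1$, and regroup terms with the same $z^*_{jt}$. Using $z^*_{jn}=1$ and $z^*_{j0}=0$, this collapses to
\[
c_{jn}-\sum_{t=1}^{n-1} z^*_{jt}\bigl(c_{j(t+1)}-c_{jt}\bigr).
\]
Finally, I would write $c_{jn}$ as the telescoping sum $\sum_{t=0}^{n-1}(c_{j(t+1)}-c_{jt})$ (using $c_{j0}=0$) and combine, obtaining exactly $\sum_{t=0}^{n-1}(1-z^*_{jt})(c_{j(t+1)}-c_{jt})$, which is the claimed right-hand side (the $t=0$ term contributes correctly because $z^*_{j0}=0$).

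The only subtlety I anticipate is bookkeeping around the endpoints: making sure the half-open interval convention in the definition of $L_j$ is consistent with the strict inequality ``$z_{jt}>\alpha$'' in the statement, and justifying $z^*_{jn}=1$ at optimum so that the integral does not pick up a contribution of $c_{j(n+1)}=\infty$. Once these conventions are pinned down, the calculation is purely mechanical.
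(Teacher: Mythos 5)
Your argument is correct: the paper offers no proof of this lemma (it is dismissed as ``easy to see''), and your level-set decomposition of $L_j$ using the WLOG monotonicity of $z^*_{jt}$, followed by summation by parts with the boundary values $z^*_{j0}=0$ and $z^*_{jn}=1$, is exactly the intended calculation. The endpoint subtleties you flag (the half-open intervals and $z^*_{jn}=1$ at optimality, which kills the $c_{j(n+1)}=\infty$ term) are handled appropriately and do not affect the identity.
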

Choose a random $\alpha$ uniformly distributed over $[h,1]$.
Then, the expected cost is
\ifdefined \stoc
\begin{align*}
\Exp[\sol]  \,\,\leq\,\, &
\int_{h}^{1} \frac{1}{1-h}\Bigl( \frac{1}{\alpha} \sum_i f_i y^*_i + 3\sum_j w_j L_j(\alpha) \Bigr)\d \alpha \\
\,\,\leq\,\, &\frac{1}{1-h} \ln \frac{1}{h} \sum_i f_i y^*_i \\
&\ \ + \frac{3}{1-h}  \sum_j w_j \sum_t (1-z^*_{jt}) (c_{j(t+1)}-c_{jt})
\end{align*}
\else
\begin{align*}
\Exp[\sol]  \,\,\leq\,\, &
\int_{h}^{1} \frac{1}{1-h}\Bigl( \frac{1}{\alpha} \sum_i f_i y^*_i + 3\sum_j w_j L_j(\alpha) \Bigr)\d \alpha \\
\,\,\leq\,\, &\frac{1}{1-h} \ln \frac{1}{h} \sum_i f_i y^*_i
+ \frac{3}{1-h}  \sum_j w_j \sum_t (1-z^*_{jt}) (c_{j(t+1)}-c_{jt})
\end{align*}
\fi
The above expression is minimized at $h=e^{-3}$,
which gives an approximation ratio $3.16$.

\begin{theorem}
There is a polynomial time approximation approximation
with an approximation factor $3.16$ for \ftfl.
\end{theorem}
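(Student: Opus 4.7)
The plan is to combine the LP rounding developed above with a randomized choice of the threshold $\alpha$, following the technique of \cite{shmoys1997approximation,guha2003constant}. First, solve the strengthened LP (the relaxation of \eqref{lp:ftfl} with \eqref{eq:gapconstraint} replaced by the knapsack-cover constraints \eqref{knapsackconst}) in polynomial time via the ellipsoid method, using the separation oracle already described: for each $(j,t)$ and each guessed cardinality $k=|A|$, test whether the sum of the smallest $|N(j,t)|-k$ entries $x^*_{ij}$ in $N(j,t)$ is at least $(r_j-k)z^*_{jt}$. This produces $(x^*,y^*,z^*)$ of value $\opt$.

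Next, draw $\alpha$ uniformly from $[h,1]$ for a parameter $h\in(0,1)$ to be fixed at the end, define $t^*_j$, build $\ty$, and apply the greedy rounding of Lemma~\ref{lm:rounding} verbatim. Since $\ty_i\le y^*_i/\alpha$, Lemma~\ref{lm:rounding}(1) bounds the facility opening cost by $\frac{1}{\alpha}\sum_i f_i y^*_i$, while Lemma~\ref{lm:rounding}(2) together with Lemma~\ref{lm:cjt} bounds the service cost of client $j$ by $3 w_j c_{jt^*_j} = 3 w_j L_j(\alpha)$.

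Averaging $\alpha$ uniformly over $[h,1]$ and invoking the identity $\int_0^1 L_j(\alpha)\,\d\alpha = \sum_t (1-z^*_{jt})(c_{j(t+1)}-c_{jt})$ gives
\[
\Exp[\sol]\le \frac{1}{1-h}\ln\tfrac{1}{h}\cdot\sum_i f_i y^*_i + \frac{3}{1-h}\cdot\sum_j w_j\sum_t (1-z^*_{jt})(c_{j(t+1)}-c_{jt}),
\]
so, since the two summands add up to $\opt$, the overall approximation factor is at most $\max\bigl(\tfrac{1}{1-h}\ln\tfrac{1}{h},\ \tfrac{3}{1-h}\bigr)$.

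The only remaining step, and the mildly nontrivial one, is the one-dimensional optimization in $h$. Equating the two coefficients yields $\ln(1/h)=3$, i.e., $h=e^{-3}$, and substituting back gives the common value $3/(1-e^{-3})\approx 3.16$. Since $L_j(\alpha)$ is a step function of $\alpha$ with at most $n$ breakpoints per client, the randomized algorithm can be derandomized by enumerating the polynomially many candidate values of $\alpha$ and returning the best resulting integral solution, producing a deterministic polynomial-time algorithm with the claimed ratio.
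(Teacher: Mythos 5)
Your proposal is correct and follows essentially the same route as the paper: solve the knapsack-cover-strengthened LP by the ellipsoid method, round with a threshold $\alpha$ drawn uniformly from $[h,1]$, combine Lemma~\ref{lm:cjt} and Lemma~\ref{lm:rounding} with the identity $\int_0^1 L_j(\alpha)\,\d\alpha=\sum_t(1-z^*_{jt})(c_{j(t+1)}-c_{jt})$, and set $h=e^{-3}$ to balance the two terms at $3/(1-e^{-3})\approx 3.16$. The only addition beyond the paper is your (correct) remark that the random $\alpha$ can be derandomized by enumerating the breakpoints of the step functions $L_j$.
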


\eat{
\subsection{$g_v(S)$ is a convex function of $\sum_i w_{vi}\dist(v,u_i)$}
In this subsection, we assume $g_v(S)=g_v(\sum_i w_{vi}\dist(v,u_i))$ is an increasing piecewise linear convex function
of $\sum_i w_{vi}\dist(v,u_i)$.
More concretely, $g_v(x)$ is defined by the upper envelope of the set of linear functions
$\{a_{v1}x+b_{v1}, a_{v2}x+b_{v2}, \ldots \}$.
We use \ftflc\ to denote this problem.

In fact, if $g_v$ can be an arbitrary increasing convex function, the problem
is as hard as set cover as shown in the following theorem.
\begin{theorem}
\label{thm:ftflc-hardness}
There is no polynomial time approximation algorithm that can approximate \ftflc\
within a factor of $(1-\epsilon)\ln n$ for any constant $\epsilon>0$ unless
$\mathrm{NP}\subseteq \mathrm{DTIME}(n^{O(\log\log n)})$
\end{theorem}
\begin{proof}
\end{proof}

In fact, as we will show in Section~\ref{sec: supermodular}.
\ftflc\ with nonincreasing weight functions is a special case of the supermodular facility location
problem and  there is an $O(\log n)$-approximation for it.

However, we notice that the convex function we used in the proof of Theorem~\ref{thm:ftflc-hardness}
has a sudden increase after a certain point.
However, in many applications, the cost function increases in a more smooth fashion
with the distance.
Namely , the increasing rate does not change drastically at any point.
To capture this,  we further assume the convex function $g_v$ satisfies the additional property
that $g_v(4x)\leq \beta g_v(x)$ for any $x\geq 0$.
For example, for $g_v(x)=x^2$, we have $\beta=16$.
We can easily extend the LP approach in the previous section to get a $\beta$-approximation
for this problem. If $\beta$ is small, this significantly improves the logarithmic approximation.

As in Section~\ref{sec:ftfl}, we create multiple copies for each demand node $j$.
Let $\text{copy}(j)$ be the set of copies corresponding to demand $j$.
The LP relaxation for \ftflc\ is as follows:
\begin{align}
\label{lp:ftflc}
\textrm{minimize} \ \ \ \ & \sum_{i}f_i y_i + \sum_j  g_j\\
\textrm{s.t}\ \ \ \
& g_j\geq a_h\cdot\sum_{k\in \text{copy}(j)} w_k \sum_t (1-z_{kt}) (c_{jt}-c_{k(t-1)})+b_h \ \ \ \ \forall i,\forall k\\
& \sum_i x_{ij} \geq r_j \ \ \ \ \ \forall j \label{serviceconst}\\
& y_i\geq x_{ij} \ \ \ \ \ \forall i, \forall j \label{xyconst}\\
& \sum_{i\leq N(j,t)\setminus A} x_{ij}\geq (k_j-|A|)z_{jt}  \ \ \ \forall j, \forall t, \forall A\subset N(j,t)  \label{knapsackconst} \\
& y_i, x_{ij}, z_{jt} \in \{0,1\} \ \ \ \ \forall i, \forall j, \forall t
\end{align}

The rounding algorithm is exactly the same as before.
Using Lemma~\ref{lm:filteringrounding} and the fact that $g_v(4x)\leq \beta g_v(x)$,
we can easily see that the service cost for demand $j$ is at most $\beta$ times the optimal fractional cost.
Therefore, we have the following theorem.

\begin{theorem}
If $g_v(4x)\leq \beta g_v(x)$ for every $v\in V$ and any $x\geq 0$,
there is a $\beta$-approximation for \ftflc.
\end{theorem}

}

\eat{
\subsection{$g_v(S)$ is a concave function of $\sum_i w_{vi}\dist(v,u_i)$}

$g_v(S)$ is a concave function $C$ of $\{\dist(u,v)\}_{u\in S}$.
For example, $g_v(S)=\sqrt{\sum_i w_{vi}\dist(v,u_i)}$.
We note that if $g_v(S)$ is a concave function $C$ of $\min_{u\in S}\dist(u,v)$,
there is a simple constant approximation for the problem.
We can simply use $C(\dist(u,v))$ as the new distance between $u$ and $v$.
Since $C$ is concave, the triangle inequality still holds, i.e.,
$C(\dist(u,v))\leq C(\dist(u,w))+C(\dist(w,v))$.

In general, we can use the result for minimizing low-rank concave function over a
polytope by Kelner et al.~\cite{kelner2007hardness} (Theorem 5.1):
there is a randomized fully-polynomial time algorithm with additive approximation $\epsilon$
for any $\epsilon>0$, for low rank quasi-concave minimization over
a polynomially-bounded polytope, when the objective function is Lipschitz w.r.t. the $L_1$ norm
with a polynomially bounded Lipschitz coefficient.
Therefore, we can get a ($1+\epsilon$)-approximation for the following
relaxation:
\begin{align}
\label{lp:ftflc}
\textrm{minimize} \ \ \ \ & \sum_{i}f_i y_i + \sum_j  C_j\left(g_i\right)\\
\textrm{s.t}\ \ \ \  &
g_i=\sum_{k\in \text{copy}(j)} w_k \sum_t (1-z_{kt}) (c_{jt}-c_{k(t-1)})
\end{align}

\red{Jian: It is not a low rank concave function. Not sure whether this can be solved efficiently.
For example, how to solve the following problem:
$\text{minimizing }\sum_i c_i\sqrt{x_i}$ subject to some linear constraints on $x_i$s.
}

}
\ifdefined\fullpaper\else
\bibliographystyle{alpha}
\bibliography{cluster}
\end{document}
\fi

\bibliographystyle{plain}
\bibliography{cluster}

\end{document}